\documentclass[letterpaper]{article}
\usepackage{preprint}
\usepackage{setspace}
\setstretch{1.5}

\usepackage[utf8]{inputenc}
\usepackage[T1]{fontenc}
\usepackage{microtype}      
\usepackage{booktabs}       

\usepackage{amsmath,amsfonts,amssymb,amsthm}
\usepackage{mathtools}

\usepackage{enumerate}
\usepackage{cite}
\usepackage{url}
\usepackage{xcolor}

\usepackage{graphicx}
\DeclareGraphicsExtensions{.eps}

\usepackage{calc}

\usepackage{tikz}
\usetikzlibrary{shapes, decorations.markings, decorations.pathreplacing, calligraphy}

\newtheoremstyle{mythm}
  {5pt}
  {5pt}
  {}
  {}
  {\bfseries}
  {.}
  {3pt}
  {}

\theoremstyle{mythm}
\newtheorem{theorem}{Theorem}

\newtheorem{corollary}{Corollary}[theorem]
\newtheorem{lemma}[theorem]{Lemma}

\newtheorem{assumption}{Assumption}

\theoremstyle{remark}

\def \ba{\begin{array}}
\def \ea{\end{array}}
\def \be{\begin{equation}}
\def \ee{\end{equation}}

\def \bb{\mathbb}

\def \mc{\mathcal}

\def \diag{\mathrm{diag}}
\def \rank{\mathrm{rank}}

\def \sym{\mathrm{sym}}
\def \T{\intercal} 

\def \S{\mathtt{S}}
\def \I{\mathtt{I}}

\def \R{\mathtt{R}}

\usepackage{geometry}
\newgeometry{top=1in,bottom=0.75in,right=0.75in,left=0.75in}

\makeatletter
\def\endthebibliography{%
  \def\@noitemerr{\@latex@warning{Empty `thebibliography' environment}}%
  \endlist
}
\makeatother

\title{Parameterization-Free Observer Design for Nonlinear Systems: Application to the State Estimation of Networked SIR Epidemics
\thanks{This work is supported by the European Union's Horizon Research and Innovation Programme under Marie Sk\l{}odowska-Curie grant agreement No. 101062523. It has also received funding from the Swedish Research Council and the Knut and Alice Wallenberg Foundation, Sweden.}}

\author{
Muhammad Umar B. Niazi
\thanks{Laboratory for Information \& Decision Systems, Massachusetts Institute of Technology, Cambridge, MA 02139, USA. Email: \texttt{niazi@mit.edu}}
\And
Karl H. Johansson
\thanks{Division of Decision and Control Systems, Digital Futures, KTH Royal Institute of Technology, SE-100 44 Stockholm, Sweden. Email: \texttt{kallej@kth.se}}
}

\begin{document}
\maketitle

\vspace{1cm}
\begin{abstract}
Traditional observer design methods rely on certain properties of the system's nonlinearity, such as Lipschitz continuity, one-sided Lipschitzness, a bounded Jacobian, or quadratic boundedness. These properties are described by parameterized inequalities. However, enforcing these inequalities globally can lead to very large parameters, resulting in overly conservative observer design criteria. These criteria become infeasible for highly nonlinear applications, such as networked epidemic processes. In this paper, we present an observer design approach for estimating the state of nonlinear systems, without requiring any parameterization of the system's nonlinearities. The proposed observer design depends only on systems' matrices and applies to systems with any nonlinearity. We establish different design criteria for ensuring both asymptotic and exponential convergence of the estimation error to zero. To demonstrate the efficacy of our approach, we employ it for estimating the state of a networked SIR epidemic model. We show that, even in the presence of measurement noise, the observer can accurately estimate the epidemic state of each node in the network. To the best of our knowledge, the proposed observer is the first that is capable of estimating the state of networked SIR models.
\end{abstract}

\newpage
\section{Introduction}

Nonlinear systems are ubiquitous in engineering, physics, and biology. Accurately estimating their states is critical for many applications, such as output feedback control, fault diagnosis, and prediction. Observers, typically used for these purposes, have been extensively studied over the past few decades in the control systems community.

Traditional methods for observer design rely on the assumption that the nonlinearity of the system satisfies certain properties, which are defined through parameterized inequalities. For example, Luenberger-like observers, proposed by Thau in \cite{thau1973}, have been designed using such parameterization-based techniques. Various observer design criteria have been proposed in the literature, such as Lipschitz continuity \cite{rajamani1998, zemouche2013, alessandri2020, niazi2022cdc}, one-sided Lipschitz and quadratic inner-boundedness \cite{abbaszadeh2010,zhao2010,benallouch2012}, bounded Jacobian \cite{phanomchoeng2011, tahir2021}, quadratic boundedness \cite{siljak2000}, passivity \cite{arcak2001}, and dissipativity \cite{moreno2004}. However, to ensure the global convergence of the observer, these parameterizations have to be enforced globally, which results in very large parameters \cite{nugroho2021}. Because of this, parameterization-based techniques lead to unnecessarily conservative observer design criteria, limiting their applicability to highly nonlinear systems.

To overcome this challenge, we propose a novel observer design approach that does not require any assumptions about the system's nonlinearity, making it parameterization-free. The proposed approach essentially treats the nonlinearity as an unknown disturbance in the estimation error dynamics. Therefore, the goal of the observer is to not only track the state but also the nonlinearity of the system. We establish observer design criteria that guarantee both asymptotic and exponential convergence of the estimation error to zero.

We demonstrate the effectiveness of our approach through the state estimation of a networked SIR epidemic model. State estimation of epidemic models is critical because the infectious population is usually difficult to measure directly and needs to be estimated. Moreover, parameterization-based design becomes infeasible for epidemic models because their nonlinearity, i.e., the mass action law, is quadratic in nature, rendering the Lipschitz constant and other parameterizations to be very large. Through simulations, we show that the observer can effectively estimate the epidemic state in every node even in the presence of measurement noise.

The main contributions of this paper include a parameterization-free observer design approach and its application to networked SIR epidemic processes. Notably, this work is the first to propose an observer capable of estimating the state of a networked SIR epidemic model, to the best of our knowledge. Our previous work on the parameterization-based observer design \cite{niazi2022cdc} can estimate the state of population models of epidemic processes. Additionally, \cite{mei2022} demonstrated that their parameterization-based observer can estimate the state of a networked SIS epidemic model. However, neither of these approaches can estimate the state of networked SIR models. Therefore, our proposed parameterization-free approach not only offers a promising alternative to traditional observer design methods but also has the potential to be applied to a wide range of nonlinear systems.

The paper is organized as follows. Section~\ref{sec:prob} outlines the problem statement, while Section~\ref{sec:observer_form} introduces the observer form utilized in this paper. In Section~\ref{sec:par-based}, we briefly review the parameterization-based designs of this observer and highlight their limitations. Next, in Section~\ref{sec:par-free}, we present our parameterization-free observer design. The effectiveness of our approach is demonstrated in Section~\ref{sec:simulation}, where we apply it to a networked SIR epidemic model. Finally, we conclude in Section~\ref{sec:conclusion} with our closing remarks.

\section{Problem Statement}
\label{sec:prob}

We consider nonlinear systems that can be described as
\begin{subequations}
    \label{eq:sys}
    \begin{align}
        \label{eq:sys_state}
        \Dot{x} &= Ax + Gf(Hx) \\
        \label{eq:sys_output}
        y &= Cx
    \end{align}
\end{subequations}
where $x(t)\in\mc{X}\subseteq\bb{R}^{n_x}$ is the state and $y(t)\in\bb{R}^{n_y}$ is the measured output. The nonlinear function $f:H\mc{X}\rightarrow\bb{R}^{n_f}$ and matrices $A\in\bb{R}^{n_x\times n_x}$, $G\in\bb{R}^{n_x\times n_f}$, $H\in\bb{R}^{n_h\times n_x}$, and $C\in\bb{R}^{n_y\times n_x}$ are known. Notice that the control input is omitted in \eqref{eq:sys} only for brevity and without loss of generality.

We assume that the system \eqref{eq:sys} is asymptotically detectable. That is, for any pair of solution trajectories $t\mapsto x(t;x_a)$ and $t\mapsto x(t;x_b)$ initialized from $x_a,x_b\in\mc{X}$ and defined on $t\in[0,\infty)$, it holds that $y(t;x_a)=y(t;x_b)$ implies $\lim_{t\rightarrow\infty}\|x(t;x_a)-x(t;x_b)\|=0$. This assumption is a necessary condition for the existence of an asymptotic state observer \cite{bernard2022}. In addition, we also assume that $(A,C)$ is a detectable pair, i.e.,
\[
\rank\left[\ba{c} sI-A \\ C \ea\right] = n, \quad \forall s\in\bb{C}_{\geq 0}.
\]
It is important to remark that, if the above condition is not satisfied for $(A,C)$, one can always add and subtract $\Delta$ so that $(A+\Delta,C)$ satisfies the above condition. In this case, we can either choose
\[
\bar{G}\bar{f}(\bar{H}x)=\underbrace{\left[\ba{cc} G & -\Delta \ea\right]}_{\bar{G}}\underbrace{\left[\ba{c} f(Hx) \\ x \ea\right]}_{\bar{f}}
\] 
where $\bar{H}=I_{n_x}$, or
\[
\bar{G}\bar{f}(\bar{H}x)=\underbrace{Gf(Hx)-\Delta x}_{\bar{f}}
\]
where $\bar{G}=I_{n_x}$ and $\bar{H}=I_{n_x}$.

Subject to the above assumptions, our goal is to design an observer 
\[
\Dot{z}=\phi(z,y), \quad \hat{x}=\theta(z,y)
\]
such that the estimation error $\xi(t)=\hat{x}(t)-x(t)$ satisfies
\be
\label{eq:KL_ineq}
\|\xi(t)\| \leq \beta(\|\xi(0)\|,t)
\ee
where $\beta:\bb{R}_{\geq 0}\times\bb{R} \rightarrow \bb{R}_{\geq 0}$ is some class-$\mc{KL}$ function. In other words, we want $\lim_{t\rightarrow+\infty}\|\xi(t)\|=0$ irrespective of the initial error $\xi(0)\in\bb{R}^{n_x}$.

\section{Proposed Form of the Observer}
\label{sec:observer_form}

We consider the observer form proposed in \cite{niazi2022cdc}
\begin{subequations}
    \label{eq:obs}
    \begin{align}
        \Dot{z} &= Mz + (ML+J)y + Nf(\eta) \label{eq:obs_state} \\
        \eta &= H\hat{x} + K(y-C\hat{x}) \label{eq:obs_eta} \\
        \hat{x} &= z + Ly \label{eq:obs_estimate}
    \end{align}
\end{subequations}
where $z(t)\in\bb{R}^{n_x}$ is the observer's state and $\hat{x}(t)\in\bb{R}^{n_x}$ is its output. The matrices $M\in\bb{R}^{n_x\times n_x}$ and $N\in\bb{R}^{n_x\times n_f}$ are chosen as
\begin{equation}
    \label{eq:matrices_M,N}
    M = A-LCA-JC, \quad N = I-LC
\end{equation}
whereas $J,L\in\bb{R}^{n_x\times n_y}$ and $K\in\bb{R}^{n_h\times n_y}$ are gain matrices that need to be designed.

Define the estimation error $\xi(t) \doteq \hat{x}(t)-x(t)$, then the error equation is given by
\begin{equation}
    \label{eq:error}
    \Dot{\xi} = M \xi + NG\Tilde{f}(\eta,Hx)
\end{equation}
where
\begin{equation}
    \label{eq:Tilde_f}
    \Tilde{f}(\eta,Hx) \doteq f(\eta) - f(Hx).
\end{equation}
Thus, our design objective is to find the gain matrices $J,K,L$ such that the estimation error satisfies \eqref{eq:KL_ineq}. 

Although the observer \eqref{eq:obs} has a different form than traditional Luenberger-like observers \cite{boutat2021}, it is inspired by the observer form developed by Luenberger \cite{luenberger1971}. This form of the observer offers additional degrees of freedom for designing $L$ in a way that minimizes the impact of $\Tilde{f}$ in the error dynamics \eqref{eq:error}. Also, the innovation term $K(y(t)-C\hat{x}(t))$ in the function $f(\eta(t))$ in \eqref{eq:obs_state}-\eqref{eq:obs_eta} allows for designing $K$ such that the observer can effectively track the nonlinear signal $f(Hx(t))$ in \eqref{eq:sys_state}. Furthermore, an additional gain matrix $J$ can be utilized to achieve stability of the matrix $M=A-LCA-JC$.

The following lemma will be used in our analysis. It is quite standard in the Lyapunov stability theory, and the reader is referred to \cite[Chapter 5]{sastry1999} for more details and the proof.

\begin{lemma} \label{lem:lyap}
    If there is a function $V:\bb{R}^{n_x}\rightarrow\bb{R}_{\geq 0}$ that, for every $\xi(t)\in\bb{R}^{n_x}$, satisfies
    \begin{enumerate}[(i)]
        \item $\gamma_1\|\xi(t)\|^2 \leq V(\xi(t)) \leq \gamma_2 \|\xi\|^2$, for some $\gamma_2>\gamma_1>0$
        \item $\Dot{V}(\xi(t)) < 0$
    \end{enumerate}
    then \eqref{eq:error} is uniformly, globally asymptotically stable. In other words, irrespective of the signal $\Tilde{f}(\eta(t),x(t))$ in the error equation \eqref{eq:error}, there exists a class-$\mc{KL}$ function $\beta$ such that \eqref{eq:KL_ineq} holds.
    
    Moreover, if, instead of (ii), $\Dot{V}(\xi(t)) < -\alpha V$, for some $\alpha>0$, then \eqref{eq:error} is uniformly, globally exponentially stable. In other words, irrespective of the signal $\Tilde{f}(\eta(t),x(t))$ (defined in \eqref{eq:Tilde_f}), there exist constants $c\doteq c(\|\xi(0)\|)>0$ and $\lambda>0$ such that \eqref{eq:KL_ineq} holds with $\beta(\|\xi(0)\|,t)=c(\|\xi(0)\|)e^{-\lambda t}$.
    \hfill $\Box$
\end{lemma}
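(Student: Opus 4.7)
The plan is to run the standard quadratic-Lyapunov argument twice: once to get the class-$\mc{KL}$ bound under condition (ii), and once to upgrade it to an exponential bound under the stronger hypothesis $\Dot V<-\alpha V$. The quadratic sandwich in (i) is what lets the decay of $V$ be translated into a decay of $\|\xi\|$, with ratio $\sqrt{\gamma_2/\gamma_1}$ absorbed into the overcoming constant.

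For the asymptotic part, I would first use (i) and the fact that $\Dot V\le 0$ to conclude that $V(\xi(t))\le V(\xi(0))$, which already yields uniform stability: $\|\xi(t)\|^2\le V(\xi(t))/\gamma_1\le (\gamma_2/\gamma_1)\|\xi(0)\|^2$ for all $t\ge 0$. Next, because $V$ is nonincreasing and bounded below, the limit $V_\infty\doteq\lim_{t\to\infty} V(\xi(t))$ exists; the strict inequality $\Dot V<0$ together with the quadratic lower bound in (i) rules out $V_\infty>0$ by a Barbashin--Krasovskii argument, giving $V(\xi(t))\to 0$ and hence $\|\xi(t)\|\to 0$. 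Combining uniform stability with attractivity and invoking the standard class-$\mc{KL}$ characterization (see, e.g., \cite{sastry1999}) gives the existence of $\beta$ with $\|\xi(t)\|\le\beta(\|\xi(0)\|,t)$.

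For the exponential part, the hypothesis $\Dot V(\xi(t))<-\alpha V(\xi(t))$ is a scalar differential inequality. Applying the comparison lemma yields $V(\xi(t))\le V(\xi(0))e^{-\alpha t}$ for all $t\ge 0$. Sandwiching with (i) then gives
\[
\gamma_1\|\xi(t)\|^2 \le V(\xi(t)) \le V(\xi(0))e^{-\alpha t} \le \gamma_2 \|\xi(0)\|^2 e^{-\alpha t},
\]
so that $\|\xi(t)\|\le c(\|\xi(0)\|)e^{-\lambda t}$ with $c(\|\xi(0)\|)=\sqrt{\gamma_2/\gamma_1}\,\|\xi(0)\|$ and $\lambda=\alpha/2$, matching the claimed form of $\beta$.

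The only subtle point, and the one I would flag as the main obstacle, is the attractivity step in the asymptotic case: the strict inequality $\Dot V<0$ in (ii) does not by itself furnish a negative definite upper bound of the form $\Dot V\le -W(\xi)$ that is uniform over trajectories (equivalently, uniform in the disturbance-like signal $\Tilde f(\eta,Hx)$). This is why the exponential case, which supplies exactly such a uniform decay rate $\alpha V$, is clean, while the asymptotic case relies on the slightly more delicate argument above. The ``irrespective of $\Tilde f$'' clause in the statement is covered automatically, since conditions (i)--(ii) are required to hold along the actual solution $\xi(t)$ of \eqref{eq:error} regardless of what $\Tilde f$ happens to be.
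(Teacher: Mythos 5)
Your proof is correct and is essentially the proof the paper intends: the paper itself gives no argument for this lemma, deferring entirely to \cite{sastry1999}, and your uniform-stability-plus-attractivity argument for the asymptotic case and comparison-lemma sandwich $\gamma_1\|\xi(t)\|^2 \leq V(\xi(t)) \leq \gamma_2\|\xi(0)\|^2 e^{-\alpha t}$ (giving $\lambda=\alpha/2$, $c=\sqrt{\gamma_2/\gamma_1}\,\|\xi(0)\|$) for the exponential case is exactly the standard route found there. The subtlety you flag --- that pointwise $\Dot{V}<0$ does not by itself yield a signal-uniform bound $\Dot{V}\leq -W(\xi)$ --- is a genuine sloppiness in the lemma's statement rather than in your argument, and it is harmless in this paper because every application of the lemma (e.g., Theorem~\ref{thm:par-free} and its corollaries) actually establishes the uniformly negative definite bound $\Dot{V}\leq-\xi^\T Q\xi$ with $Q>0$, which is precisely what your attractivity step requires.
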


The next section introduces observer design criteria that rely on the Lipschitz continuity and quadratic boundedness of the nonlinear function $f$. This approach is known as parameterization-based design. Then, in the subsequent section, we will propose a parameterization-free design, which does not rely on any specific parameterization of $f$.

\section{Review of Parameterization-based Observer Design and its Limitations}
\label{sec:par-based}

In this section, we briefly review existing results related to the parameterization-based design of observers, with a particular focus on two commonly used parameterizations: Lipschitz continuity and quadratic boundedness. However, it is important to note that enforcing these parameterizations globally can have limitations, which will also be discussed.

\begin{assumption}
    \label{assum:lipschitz}
    The function $f$ is Lipschitz continuous on $\mc{X}$. That is, there exists $\ell\in\bb{R}_{\geq 0}$ such that, for every $x,\hat{x}\in\mc{X}$,
    \begin{equation}
        \label{eq:lipschitz}
        \|f(H\hat{x})-f(Hx)\| \leq \ell \|H(\hat{x}-x)\|.
    \end{equation}
\end{assumption}

It is well-known that if $f$ is continuously differentiable, then $\ell$ can be computed by solving the following nonlinear optimization problem
\begin{equation}
    \label{eq:lip_compute}
    \ell=\sup_{x\in\mc{X}} \left\| \frac{\partial f}{\partial x} (Hx) \right\|.
\end{equation}

\begin{theorem}[Niazi \& Johansson \cite{niazi2022cdc}]
    \label{thm:lip}
    Let Assumption~\ref{assum:lipschitz} hold. If there exist a positive definite matrix $P=P^\T\in\bb{R}^{n_x}$ and gain matrices $J,L\in\bb{R}^{n_x\times n_y}$ and $K\in\bb{R}^{n_h\times n_y}$ such that
    \begin{equation}
        \label{eq:ari_lip}
        M^\T P + PM + PNGG^\T N^\T P + \ell^2 (H-KC)^\T (H-KC) <0
    \end{equation}
    where $M,N$ are given in \eqref{eq:matrices_M,N}, then the estimation error $\xi(t)$ globally asymptotically converges to zero, i.e., $\lim_{t\rightarrow\infty}\|\xi(t)\|=0$.
\end{theorem}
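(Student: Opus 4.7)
The plan is to apply Lemma~\ref{lem:lyap} with the quadratic Lyapunov candidate $V(\xi)=\xi^\T P\xi$. Condition (i) is immediate from $P\succ 0$ with $\gamma_1=\lambda_{\min}(P)$ and $\gamma_2=\lambda_{\max}(P)$, so the real content is to show that the ARI~\eqref{eq:ari_lip} forces $\dot V<0$ along the error dynamics~\eqref{eq:error}.

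First I would compute $\dot V$ along \eqref{eq:error}, getting
\[
\dot V = \xi^\T(M^\T P+PM)\xi + 2\xi^\T PNG\,\Tilde{f}(\eta,Hx).
\]
The cross term is handled by Young's inequality: $2\xi^\T PNG\Tilde{f}\leq \xi^\T PNGG^\T N^\T P\xi+\Tilde{f}^\T\Tilde{f}$. To bound $\Tilde{f}^\T\Tilde{f}$ via Assumption~\ref{assum:lipschitz}, I need to rewrite its argument in terms of $\xi$. Using $y=Cx$ and the definition $\eta=H\hat x+K(y-C\hat x)$, a direct calculation yields the key identity
\[
\eta-Hx=(H-KC)\hat x+KCx-Hx=(H-KC)(\hat x-x)=(H-KC)\xi,
\]
which is precisely the step that motivates the placement of $K$ in the observer and is the one nontrivial algebraic move in the proof. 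Then \eqref{eq:lipschitz} gives $\|\Tilde{f}(\eta,Hx)\|^2\leq \ell^2\,\xi^\T(H-KC)^\T(H-KC)\xi$.

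Substituting both bounds into $\dot V$ yields
\[
\dot V \leq \xi^\T\!\left[M^\T P+PM+PNGG^\T N^\T P+\ell^2(H-KC)^\T(H-KC)\right]\!\xi,
\]
so the hypothesis \eqref{eq:ari_lip} implies $\dot V<0$ for every $\xi\neq 0$. Condition (ii) of Lemma~\ref{lem:lyap} is therefore satisfied, which delivers the class-$\mc{KL}$ bound \eqref{eq:KL_ineq} and hence $\lim_{t\to\infty}\|\xi(t)\|=0$, completing the proof.

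I do not anticipate any serious obstacle: the only slightly delicate point is the identity $\eta-Hx=(H-KC)\xi$, which is what makes the Lipschitz bound express itself as a quadratic form in $\xi$ alone (rather than in $(\xi,x)$). Everything else is a routine Young-inequality/Lyapunov argument, and the condition $\dot V<0$ for all nonzero $\xi$ follows from the strict matrix inequality because the bracketed matrix in the final display is symmetric and negative definite.
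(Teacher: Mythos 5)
Your proposal is correct and takes essentially the same route as the paper's: the proof of Theorem~\ref{thm:lip} (deferred here to \cite{niazi2022cdc}, and mirrored by the in-paper proof of Theorem~\ref{thm:par-free}) uses the same Lyapunov function $V=\xi^\T P\xi$ together with Lemma~\ref{lem:lyap}, the same key identity $\eta-Hx=(H-KC)\xi$ (cf.\ \eqref{eq:GamLamGam}), and the same Young's-inequality treatment of the cross term, which is exactly how the terms $PNGG^\T N^\T P$ and $\ell^2(H-KC)^\T(H-KC)$ arise in \eqref{eq:ari_lip}. The only pedantic caveat---inherited from the paper's own phrasing of Assumption~\ref{assum:lipschitz}---is that applying \eqref{eq:lipschitz} to $\Tilde{f}(\eta,Hx)$ requires reading the assumption as Lipschitz continuity of $f$ on a set containing the trajectory $\eta(t)$, not merely at points of the form $H\hat{x}$ with $\hat{x}\in\mc{X}$.
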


The algebraic Riccati inequality (ARI) \eqref{eq:ari_lip} guarantees global asymptotic stability of the error dynamics \eqref{eq:error}. To guarantee global exponential stability under Lipschitz property, one can adapt \eqref{eq:ari_lip} to
\begin{equation}
    \label{eq:ari_lip_exp}
    (M+\alpha I_{n_x})^\T P + P(M+\alpha I_{n_x}) + PNGG^\T N^\T P
    + \ell^2 (H-KC)^\T (H-KC) <0.
\end{equation}
for some $\alpha>0$. Moreover, note that both \eqref{eq:ari_lip} and, given $\alpha>0$, \eqref{eq:ari_lip_exp} can be equivalently represented as linear matrix inequalities using Schur complement lemma.

\begin{assumption}
    \label{assum:qb}
    The function $f$ is quadratically bounded on $\mc{X}$. That is, there exists a positive definite matrix $Q\in\bb{R}^{n_x\times n_x}$ such that, for every $x,\hat{x}\in\mc{X}$,
    \begin{equation}
        \label{eq:qb}
        \|f(H\hat{x})-f(Hx)\|^2 \leq (\hat{x}-x)H^\T Q H(\hat{x}-x).
    \end{equation}
\end{assumption}

For general nonlinear functions $f$, finding a matrix $Q$ such that \eqref{eq:qb} holds is a difficult problem. However, if $f$ is continuously differentiable, then we can employ a method proposed by \cite{nugroho2021} to compute a diagonal $Q=\diag(q_1,\dots,q_{n_x})$ by solving the following nonlinear optimization problem:
\begin{equation}
    \label{eq:qb_compute}
    q_i=\sup_{x\in\mc{X}} n_x \sum_{j=1}^{n_x} \left(\frac{\partial f_j}{\partial x_i} (Hx) \right)^2.
\end{equation}

\begin{theorem}
    Let Assumption~\ref{assum:qb} hold. If there exist a positive definite matrix $P=P^\T\in\bb{R}^{n_x}$ and gain matrices $J,L\in\bb{R}^{n_x\times n_y}$ and $K\in\bb{R}^{n_h\times n_y}$ such that
    \begin{equation}
        \label{eq:ari_qb}
        M^\T P + PM + PNGG^\T N^\T P + (H-KC)^\T Q (H-KC) <0
    \end{equation}
    where $M,N$ are given in \eqref{eq:matrices_M,N}, then the estimation error $\xi(t)$ globally asymptotically converges to zero, i.e., $\lim_{t\rightarrow\infty}\|\xi(t)\|=0$.
\end{theorem}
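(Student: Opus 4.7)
The plan is to mimic the Lyapunov argument that must underlie the Lipschitz version (Theorem~\ref{thm:lip}), replacing the Lipschitz bound by the quadratic bound \eqref{eq:qb} at the key step. Concretely, I would take the Lyapunov candidate $V(\xi) = \xi^\T P \xi$. Since $P$ is positive definite, condition~(i) of Lemma~\ref{lem:lyap} is immediate with $\gamma_1 = \lambda_{\min}(P)$ and $\gamma_2 = \lambda_{\max}(P)$, so the whole task reduces to establishing $\dot V < 0$ along trajectories of the error system \eqref{eq:error}.

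Differentiating $V$ along \eqref{eq:error} yields
\[
\dot V = \xi^\T (M^\T P + PM)\,\xi + 2\,\xi^\T P N G\, \Tilde f(\eta, Hx).
\]
For the cross term I would apply the elementary inequality $2 a^\T b \leq a^\T a + b^\T b$ with $a = G^\T N^\T P \xi$ and $b = \Tilde f$, to obtain
\[
2\,\xi^\T P N G\, \Tilde f \leq \xi^\T P N G G^\T N^\T P \,\xi + \|\Tilde f\|^2.
\]
Next, I would rewrite $\eta - Hx$ in terms of $\xi$: using $\hat x = x + \xi$ and $y = Cx$,
\[
\eta - Hx = H\hat x + K(y - C\hat x) - Hx = (H - KC)\xi.
\]
Feeding this into Assumption~\ref{assum:qb} applied to the pair $(\eta, Hx)$ gives
\[
\|\Tilde f\|^2 = \|f(\eta) - f(Hx)\|^2 \leq \xi^\T (H - KC)^\T Q (H - KC)\,\xi.
\]
Collecting the three bounds produces $\dot V \leq \xi^\T \Omega \,\xi$, where $\Omega$ is exactly the left-hand side of \eqref{eq:ari_qb}; the hypothesis then forces $\dot V(\xi) < 0$ for every $\xi \neq 0$, and Lemma~\ref{lem:lyap} delivers global asymptotic convergence of $\xi(t)$ to zero.

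The only genuinely delicate point is the step where Assumption~\ref{assum:qb} is applied with argument $\eta$ rather than with an argument of the form $H\hat x$: the assumption is stated for pairs $(H\hat x, Hx)$, yet $\eta = Hx + (H-KC)\xi$ need not lie in $H\mathcal X$. I would handle this by reading \eqref{eq:qb} in its natural form as a quadratic bound on $f$ in its argument, i.e.\ $\|f(u_1) - f(u_2)\|^2 \leq (u_1 - u_2)^\T Q (u_1 - u_2)$ on the relevant domain, which is also what the dimensions of $Q$ and the term $(H-KC)^\T Q (H-KC)$ in \eqref{eq:ari_qb} force. With that reading, all other steps are routine algebra and are entirely parallel to the Lipschitz proof of Theorem~\ref{thm:lip}; the exponential variant would follow by shifting $M \mapsto M + \alpha I$ exactly as in \eqref{eq:ari_lip_exp}.
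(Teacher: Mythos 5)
Your proof is correct and follows essentially the same route the paper intends: the paper's own proof is given only by reference to the Lipschitz case (Theorem~\ref{thm:lip}), whose argument is exactly your Lyapunov function $V=\xi^\T P\xi$, Young's inequality on the cross term, and the incremental bound applied to the pair $(\eta,Hx)$ with $\eta-Hx=(H-KC)\xi$. The delicate point you flag is handled correctly: reading Assumption~\ref{assum:qb} as a quadratic bound in the function's argument (rather than literally on pairs $H\hat{x},Hx$ with $\hat{x}\in\mc{X}$) is indeed the reading forced by the term $(H-KC)^\T Q(H-KC)$ in \eqref{eq:ari_qb}.
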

\begin{proof}
    The proof is similar to that of Theorem~\ref{thm:lip} in \cite{niazi2022cdc}.
\end{proof}

The Lipschitz continuity \eqref{eq:lipschitz} and quadratic boundedness \eqref{eq:qb} parameterizations bound the incremental change in the nonlinear function $f$ from above in terms of the incremental change in its arguments. By enforcing these inequalities over the whole state space $\mc{X}$, it turns out that the Lipschitz constant $\ell$ and/or quadratic boundedness parameter $Q$ are very large. This restricts the possibility to choose the gain matrices such that they satisfy ARIs \eqref{eq:ari_lip} and/or \eqref{eq:ari_qb}. For instance, in the case of \eqref{eq:ari_qb}, the equivalent LMI feasibility problem is
\begin{align*}
    \left[\ba{cc} \sym(PA-RCA-SC)+T & (P-RC)G \\ * & -I_{n_x} \ea\right] &< 0 \\
    \left[\ba{cc} -T & (H-KC)^\T \\ * & -Q^{-1} \ea\right] &\leq 0
\end{align*}
where $J=P^{-1}S$ and $L=P^{-1}R$. Notice that when the diagonal elements of $Q$ computed by \eqref{eq:qb_compute} are very large, $Q^{-1}$ will have very small values. This will be compensated by having $T$ with very large eigenvalues in the second inequality, which will result in the violation of the first inequality.

In \cite{abbaszadeh2010}, \cite{zhang2012}, and \cite{phanomchoeng2011}, other parameterization-based designs, such as the one-sided Lipschitz property, quadratic inner boundedness, and bounded Jacobian, are presented in the context of Luenberger-like observers. However, the design criteria for one-sided Lipschitz and quadratic inner boundedness are found to be quite conservative; see \cite{niazi2022cdc} for more details about these limitations. Moreover, the design in \cite{phanomchoeng2011} that uses a bounded Jacobian property of $f$ is technically incorrect, as explained in \cite{niazi2022cdc}. Although \cite{rajamani2020} also uses a bounded Jacobian property for designing a Luenberger-like observer with switching observer gain, designing a switching signal remains an open problem. Finally, the design using the dissipativity properties of $\Tilde{f}(\eta,Hx)$ in \cite{moreno2004} results in nonlinear matrix inequality, which is difficult to solve computationally.

In addition to the conservativeness of the parameterization-based observer design criteria, finding parameterizations is also a difficult task for general nonlinear functions, especially for non-differentiable functions. Therefore, it is important to have observer design criteria that does not rely on any specific parameterization of the system's nonlinearity.

\section{Parameterization-free Observer Design}
\label{sec:par-free}

In this section, we introduce an observer design that does not rely on any parameterization of the nonlinearity $f$. This parameterization-free approach allows us to establish design criteria without requiring any global properties of $f$. Using this approach, we essentially treat the nonlinearity as an external disturbance in the error equation \eqref{eq:error}. As a result, the goal of the observer design is to stabilize the estimation error by effectively filtering out the nonlinearity by tracking it.

\begin{theorem}
    \label{thm:par-free}
    If there exist positive definite matrices $P,Q\in\bb{R}^{n_x\times n_x}$, a scalar $\rho\geq 0$, and gain matrices $J,L\in\bb{R}^{n_x\times n_y}$ and $K\in\bb{R}^{n_h\times n_y}$ such that
    \begin{subequations}
        \label{eq:general_asymptotic}
        \begin{align}
        \label{eq:general_asymptotic_a}
        \Phi + \Gamma^\T \Lambda \Gamma &\leq 0 \\
        \Lambda &\geq \rho I
        \label{eq:general_asymptotic_b}
        \end{align}
    \end{subequations}
    where
    \begin{align}
        \Phi &= \left[\ba{cc} 
        M^\T P + PM + Q & PNG \\ G^\T N^\T P & -\rho I_{n_f} \ea\right] \label{eq:Phi} \\
        \Gamma &= \left[\ba{cc}
        H-KC & 0_{n_h\times n_f} \\ 0_{n_f\times n_x} & I_{n_f} \ea\right] \label{eq:Gamma}
    \end{align}
    with $M,N$ given in \eqref{eq:matrices_M,N}, then the estimation error $\xi(t)$ globally asymptotically converges to zero.
\end{theorem}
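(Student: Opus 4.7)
The plan is to use the quadratic Lyapunov function $V(\xi) = \xi^\T P \xi$ and appeal to Lemma~\ref{lem:lyap}. Since $P$ is positive definite, the sandwich $\lambda_{\min}(P)\|\xi\|^2 \leq V(\xi) \leq \lambda_{\max}(P)\|\xi\|^2$ immediately furnishes condition (i) of the lemma, so the task reduces to showing $\dot V < 0$ along the error equation \eqref{eq:error} for every $\xi \neq 0$, independently of the value of $\tilde f$.

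First I would compute $\dot V = \xi^\T(M^\T P + PM)\xi + 2\xi^\T PNG\,\tilde f$ and record the key identity $\eta - Hx = (H-KC)\xi$, which exhibits the argument gap of $\tilde f$ precisely as the first block-row of $\Gamma\,[\xi^\T,\tilde f^\T]^\T$. Pre- and post-multiplying \eqref{eq:general_asymptotic_a} by $[\xi^\T,\tilde f^\T]$ and its transpose, and recognising that the $\Phi$ block contributes $\dot V + \xi^\T Q\xi - \rho\|\tilde f\|^2$, I obtain
\begin{equation*}
    \dot V + \xi^\T Q\xi - \rho\|\tilde f\|^2 + w^\T \Lambda w \leq 0, \qquad w \doteq \begin{bmatrix} (H-KC)\xi \\ \tilde f \end{bmatrix}.
\end{equation*}

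Then I would invoke \eqref{eq:general_asymptotic_b} to lower-bound $w^\T \Lambda w \geq \rho\|w\|^2 = \rho\|(H-KC)\xi\|^2 + \rho\|\tilde f\|^2$. The crucial cancellation is that the $+\rho\|\tilde f\|^2$ from this bound exactly absorbs the $-\rho\|\tilde f\|^2$ inherited from $\Phi$, wiping out all dependence on the disturbance $\tilde f$. What remains is
\begin{equation*}
    \dot V \leq -\xi^\T Q\xi - \rho\|(H-KC)\xi\|^2 \leq -\lambda_{\min}(Q)\|\xi\|^2 < 0 \quad \text{for } \xi \neq 0,
\end{equation*}
and combining this with the earlier bounds on $V$, Lemma~\ref{lem:lyap} delivers the desired class-$\mc{KL}$ estimate \eqref{eq:KL_ineq}.

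The main obstacle I expect is conceptual rather than computational: because the design is parameterization-free, $\tilde f$ must be treated as a completely arbitrary signal, so unlike the proof of Theorem~\ref{thm:lip} the argument cannot lean on any pointwise bound relating $\tilde f$ to $\xi$. Stability must instead emerge from an algebraic cancellation \emph{internal} to the matrix inequality, and the block-diagonal structure of $\Gamma$ together with the uniform lower bound $\Lambda \geq \rho I$ is precisely what orchestrates this cancellation. Once that observation is identified, the rest is a routine S-procedure-style manipulation that reduces to verifying the two quadratic forms align as above.
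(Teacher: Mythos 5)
Your proposal is correct and takes essentially the same route as the paper's proof: the same Lyapunov function $V=\xi^\T P\xi$, the same key identity $\eta-Hx=(H-KC)\xi$ encoded through $\Gamma$, and the same cancellation of the $\rho\|\tilde{f}\|^2$ term via $\Lambda\geq\rho I$, after which $\dot V\leq-\xi^\T Q\xi<0$ and Lemma~\ref{lem:lyap} applies. The only cosmetic difference is that you pre- and post-multiply \eqref{eq:general_asymptotic_a} by $[\xi^\T\ \tilde{f}^\T]^\T$ directly, whereas the paper adds and subtracts the same quadratic terms in $\dot V$; your version even retains the extra nonnegative term $\rho\|(H-KC)\xi\|^2$, which the paper simply discards.
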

\begin{proof}
    Note that $\eta(t)-Hx(t) = (H-KC)\xi(t)$ and, for every $\eta,Hx$ and every $\Lambda\geq 0$, it holds that
	\begin{equation}
        \left[\ba{c} \eta-Hx \\ \Tilde{f}(\eta,Hx) \ea\right]^\T \Lambda \left[\ba{c} \eta-Hx \\ \Tilde{f}(\eta,Hx) \ea\right]
        = \left[\ba{c} \xi \\ \Tilde{f}(\eta,Hx) \ea\right]^\T \Gamma^\T \Lambda \Gamma \left[\ba{c} \xi \\ \Tilde{f}(\eta,Hx) \ea\right] \geq 0
        \label{eq:GamLamGam}
    \end{equation}
    where $\xi(t)=\hat{x}(t)-x(t)$ is the estimation error, $\eta(t)$ is given in \eqref{eq:obs_eta}, $\Tilde{f}(\eta,Hx)$ in \eqref{eq:Tilde_f}, and $\Gamma$ in \eqref{eq:Gamma}.
    
    Let $V=\xi^\T P \xi$ be a candidate Lyapunov function. It satisfies Lemma~\ref{lem:lyap}(i) because
    \[
    \lambda_{\min}(P) \|\xi(t)\|^2 \leq \xi^\T(t) P \xi(t) \leq \lambda_{\max}(P) \|\xi(t)\|^2. 
    \]
    To show Lemma~\ref{lem:lyap}(ii), we take its time derivative and add and subtract certain terms on the right hand side
    \begin{align*}
        \Dot{V} &= (\xi^\T M + \Tilde{f}^\T G^\T N^\T) P \xi + \xi^\T P (M\xi + NG\Tilde{f}) \\
        &= \left[\ba{c} \xi \\ \Tilde{f} \ea\right]^\T \left[\ba{cc} 
        M^\T P + PM & PNG \\ G^\T N^\T P & 0_{n_f\times n_f} \ea\right] \left[\ba{c} \xi \\ \Tilde{f} \ea\right]^\T \pm \xi^\T Q \xi \pm \rho \Tilde{f}^\T \Tilde{f} \pm \left[\ba{c} \eta-Hx \\ \Tilde{f} \ea\right]^\T \Lambda \left[\ba{c} \eta-Hx \\ \Tilde{f} \ea\right].
    \end{align*}
    Using \eqref{eq:GamLamGam}, we obtain
    \begin{align}
        \Dot{V} &= -\xi^\T Q \xi + \left[\ba{c} \xi \\ \Tilde{f} \ea\right]^\T (\Phi + \Gamma^\T \Lambda \Gamma ) \left[\ba{c} \xi \\ \Tilde{f} \ea\right] + \left[\ba{c} \eta-Hx \\ \Tilde{f} \ea\right]^\T (-\Lambda+\rho\mc{I}) \left[\ba{c} \eta-Hx \\ \Tilde{f} \ea\right]
        \label{eq:Dot_V_general}
    \end{align}
    where $\Phi$ is given in \eqref{eq:Phi}, $\Gamma$ in \eqref{eq:Gamma}, and
    \[
    \mc{I} = \left[\ba{cc} 0 & 0 \\ 0 & I_{n_f} \ea\right].
    \]
    Notice that if \eqref{eq:general_asymptotic_b} holds, then $-\Lambda+\rho\mc{I}\leq 0$ because $\rho I \geq \rho \mc{I}$ for $\rho\geq 0$. Moreover, if \eqref{eq:general_asymptotic_a} holds as well, then $\Dot{V}\leq -\xi^\T Q \xi$. Since $Q>0$, we have $\Dot{V}<0$. Hence, by Lemma~\ref{lem:lyap}, the error equation \eqref{eq:error} is (uniformly) globally asymptotically stable, which concludes the proof.
\end{proof}

Note that by choosing some $\rho\geq 0$, substituting $R=PL$ and $S=PJ$, and using Schur complement lemma we can obtain an LMI feasibility problem equivalent to \eqref{eq:general_asymptotic} as
\begin{equation}
\label{eq:LMI_general}
    \boxed{P>0, \; Q>0, \; \left[\ba{cc} \Phi & \Gamma^\T \\ \Gamma & -\Tilde{\Lambda} \ea\right] \leq 0, \; \Tilde{\Lambda} \leq \frac{1}{\rho} I}
\end{equation}
where $\Tilde{\Lambda}=\Lambda^{-1}$.
In \eqref{eq:Phi}, $PM=PA-RCA-SC$ and $PN=P-RC$. Then, the gain matrices are obtained as $J=P^{-1}S$ and $L=P^{-1}R$.

For the global exponential stability of \eqref{eq:error}, we can have the following modifications to Theorem~\ref{thm:par-free}. 

\begin{corollary}
    If the condition \eqref{eq:general_asymptotic} is satisfied with $Q\geq \alpha P > 0$, for some $\alpha>0$, then the estimation error $\xi(t)$ globally exponentially converges to zero.
\end{corollary}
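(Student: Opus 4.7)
The plan is to reuse essentially all of the machinery from the proof of Theorem~\ref{thm:par-free}, and then add one comparison step exploiting the new hypothesis $Q \geq \alpha P$. Specifically, I would take $V(\xi) = \xi^\T P \xi$ as the Lyapunov candidate exactly as before, so that Lemma~\ref{lem:lyap}(i) holds with $\gamma_1 = \lambda_{\min}(P)$ and $\gamma_2 = \lambda_{\max}(P)$.

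Next, I would repeat verbatim the derivation of $\dot V$ in the proof of Theorem~\ref{thm:par-free}, including the $\pm\xi^\T Q \xi$, $\pm \rho \tilde f^\T \tilde f$, and $\pm [\eta-Hx;\tilde f]^\T \Lambda[\eta-Hx;\tilde f]$ manipulations and the use of \eqref{eq:GamLamGam}. Under the hypothesis \eqref{eq:general_asymptotic} (which is assumed to hold in the statement of the corollary), the two quadratic forms $[\xi;\tilde f]^\T(\Phi + \Gamma^\T \Lambda \Gamma)[\xi;\tilde f]$ and $[\eta-Hx;\tilde f]^\T(-\Lambda + \rho \mathcal I)[\eta-Hx;\tilde f]$ in \eqref{eq:Dot_V_general} are nonpositive, leaving the clean bound
\begin{equation*}
    \dot V(\xi(t)) \leq -\xi^\T(t) Q \xi(t).
\end{equation*}

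The single new ingredient is the extra assumption $Q \geq \alpha P$: this implies $\xi^\T Q \xi \geq \alpha \xi^\T P \xi = \alpha V$, and hence
\begin{equation*}
    \dot V(\xi(t)) \leq -\alpha V(\xi(t)).
\end{equation*}
Finally, I would invoke the second (exponential) part of Lemma~\ref{lem:lyap}, which applies precisely when $\dot V < -\alpha V$ for some $\alpha>0$, to obtain a bound of the form $\|\xi(t)\| \leq c(\|\xi(0)\|) e^{-\lambda t}$ satisfying \eqref{eq:KL_ineq}, concluding global exponential convergence of $\xi$ to zero.

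There is no real obstacle here: the corollary is a direct refinement of Theorem~\ref{thm:par-free}, and the only nontrivial check is the inequality $\xi^\T Q \xi \geq \alpha V$, which is immediate from the positive semidefinite ordering $Q \geq \alpha P$. The hypothesis $Q \geq \alpha P$ can also be rendered as the additional LMI $Q - \alpha P \geq 0$, appended to \eqref{eq:LMI_general}, so the computational formulation is unchanged in character.
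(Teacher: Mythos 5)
Your proposal is correct and follows essentially the same route as the paper: the paper's proof likewise takes \eqref{eq:Dot_V_general} to conclude $\Dot{V}\leq -\xi^\T Q\xi$, uses $Q\geq\alpha P$ to get $\Dot{V}\leq -\alpha V$, and invokes Lemma~\ref{lem:lyap}. Your write-up is merely more explicit about re-deriving the bound from Theorem~\ref{thm:par-free}, which the paper simply cites.
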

\begin{proof}
    Assume \eqref{eq:general_asymptotic} is satisfied, then \eqref{eq:Dot_V_general} implies $\Dot{V}\leq -\xi^\T Q \xi$. If $Q\geq \alpha P$, then $\Dot{V}\leq -\alpha V$. The result then follows from Lemma~\ref{lem:lyap}.
\end{proof}

For a given $\alpha>0$, the LMI condition \eqref{eq:LMI_general} can be obtained with $Q\geq \alpha P$ instead of $Q>0$. However, $\alpha>0$ can be made a decision variable in \eqref{eq:LMI_general}.

\begin{corollary}
    \label{corol:alpha}
    If the condition \eqref{eq:general_asymptotic} is satisfied with $Q\geq \alpha I_{n_x}$ and $P\leq I_{n_x}$, for some $\alpha>0$, then the estimation error $\xi(t)$ globally exponentially converges to zero.
\end{corollary}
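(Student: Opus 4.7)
The plan is to reduce this to the previous corollary by showing that its hypothesis $Q \geq \alpha P$ follows from the two conditions $Q \geq \alpha I_{n_x}$ and $P \leq I_{n_x}$. Indeed, $P \leq I_{n_x}$ implies $\alpha P \leq \alpha I_{n_x}$, and combined with $\alpha I_{n_x} \leq Q$ this gives $\alpha P \leq Q$ in the positive semidefinite order. Hence the hypothesis of the preceding corollary is met and the conclusion (global exponential convergence) follows directly.

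Alternatively, I would give a self-contained derivation that mirrors the previous argument. Starting from the proof of Theorem~\ref{thm:par-free}, the Lyapunov function $V(\xi) = \xi^\T P \xi$ satisfies $\dot V \leq -\xi^\T Q \xi$ whenever \eqref{eq:general_asymptotic} holds. Using $Q \geq \alpha I_{n_x}$ gives $\dot V \leq -\alpha \|\xi\|^2$, and using $P \leq I_{n_x}$ gives $V \leq \|\xi\|^2$, so that $-\alpha \|\xi\|^2 \leq -\alpha V$. Chaining these two inequalities yields $\dot V \leq -\alpha V$, and Lemma~\ref{lem:lyap} then delivers global exponential convergence of $\xi(t)$ to zero.

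There is essentially no technical obstacle here; the statement is a convenient repackaging of the previous corollary in a form that is easier to encode as an LMI (bounding $Q$ from below and $P$ from above by identity-scaled matrices, rather than requiring the coupled inequality $Q \geq \alpha P$). The only point to keep in mind when writing the proof is to be explicit about the positive semidefinite ordering, since both reductions ($\alpha P \leq Q$ and $V \leq \|\xi\|^2$) rely on the monotonicity of quadratic forms under the Loewner order.
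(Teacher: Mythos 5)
Your proposal is correct and matches the paper's argument: the paper's own proof is exactly your second derivation, using $\dot V \leq -\xi^\T Q\xi \leq -\alpha\|\xi\|^2 \leq -\alpha V$ (via $Q \geq \alpha I_{n_x}$ and $P \leq I_{n_x}$) followed by Lemma~\ref{lem:lyap}, while your first reduction to the preceding corollary via $Q \geq \alpha I_{n_x} \geq \alpha P$ is a trivially equivalent packaging of the same idea. If anything, your writeup is slightly more careful than the paper's, which compresses the hypotheses into the chain ``$Q \geq \alpha I_{n_x} \geq P$'' even though only $P \leq I_{n_x}$ (not $P \leq \alpha I_{n_x}$) is assumed.
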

\begin{proof}
    Assume \eqref{eq:general_asymptotic} is satisfied, then \eqref{eq:Dot_V_general} implies $\Dot{V}\leq -\xi^\T Q \xi$. If $Q\geq \alpha I_{n_x} \geq P$, then $\Dot{V}\leq -\alpha \xi^\T \xi \leq -\alpha V$. Then, the proof is concluded by Lemma~\ref{lem:lyap}.
\end{proof}

Using Corollary~\ref{corol:alpha}, the LMI condition \eqref{eq:LMI_general} is modified by adding $\alpha >0$, $Q\geq \alpha I$, and $0<P\leq I$.


In the next section, we demonstrate the effectiveness of parameterization-free approach for the state estimation of networked SIR epidemic processes.

\section{Application to a Networked SIR Epidemic Model}
\label{sec:simulation}

Consider a set of nodes $\mc{V}=\{1,\dots,n\}$ connected over a digraph $\mc{G}=(\mc{V},\mc{E})$, where $\mc{E}\subseteq\mc{V}\times\mc{V}$ is the set of edges. Every node~$i$ possesses a state $(x_{\S}^i(t),x_{\I}^i(t),x_{\R}^i(t))$ that represents the fraction of susceptible, infectious, and recovered individuals in its population. An edge $(i,j)\in\mc{E}$ implies that the susceptible population of node~$i$ can be infected by the infectious population of node~$j$, where the weight of such a connection is determined by the infection rate $\beta_{i}w_{ij}$, where $\beta_i>0$ is the infection susceptibility of $i$ and $w_{ij}>0$ is the contact rate of $i$ with $j$. The infectious population of each node~$i$ recovers with a recovery rate $\delta_i>0$.

Mathematically, the deterministic SIR epidemic process over $\mc{G}$ is described by
\begin{subequations}
    \label{eq:sir_net_i}
    \begin{align}
	   \Dot{x}_{\I}^i &= (1-x_{\I}^i-x_{\R}^i) \sum_{j=1}^n \beta_{i}w_{ij} x_{\I}^j - \delta_i x_{\I}^i \\
        \Dot{x}_{\R}^i &= \delta_i x_{\I}^i
    \end{align}
\end{subequations}
where $x_{\S}^i(t) = 1-x_{\I}^i(t)-x_{\R}^i(t)$.

Assume that $x_{\R}^i(t)$ is available through measurements at each node~$i$. Define $W=[w_{ij}]$ to be the weighted adjacency matrix of the digraph $\mc{G}$, and let $B=\diag(\beta_1,\dots,\beta_n)$ and $D=\diag(\delta_1,\dots,\delta_n)$. Let $x(t)=[\ba{cc} x_{\I}(t)^\T & x_{\R}(t)^\T \ea]^\T$ with $x_{\I}(t)=[\ba{ccc} x_{\I}^1(t) & \dots & x_{\I}^n(t) \ea]^\T$ and $x_{\R}(t)=[\ba{ccc} x_{\R}^1(t) & \dots & x_{\R}^n(t) \ea]^\T$, then \eqref{eq:sir_net_i} can be written as \eqref{eq:sys} with
\begin{align*}
    A &= \left[\ba{cc}
    BW-D & 0_{n\times n} \\ D & 0_{n\times n}
    \ea\right], \; G = \left[\ba{c} -I_n \\ 0_{n\times n}\ea\right], \; H=I_{2n} \\
    C &= \left[\ba{cc} 0_{n\times n} & I_n \ea\right]
\end{align*}
and $f(Hx)=\diag(x_\I(t)+x_\R(t))BWx_\I(t)$. 
The pair $(A,C)$ is observable if the recovery rates $\delta_i$ are distinct.

\begin{figure}[h]
    \centering
    \includegraphics[width=0.2\textwidth, trim={80 75 70 65}, clip]{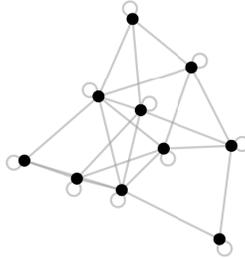}
    \caption{Bidirected graph $\mc{G}$ considered for the simulation example.}
    \label{fig:graph}
\end{figure}

We consider the number of nodes to be $n=10$. We generate a random bidirected graph $\mc{G}$ hown in Fig.~\ref{fig:graph} with the probability of edge between each pair of nodes equal to $0.5$, where the edge weight $w_{ij}$ is chosen in the interval $(0,2)$ uniformly randomly. Each node has a self-loop indicating infection spread within its population. The parameters $\beta_i$ and $\delta_i$ are also chosen uniformly randomly from $(0,1)$ for each $i=1,\dots,n$. For 100 realizations of random bidirected graphs, we found that the Lipschitz constant $\ell$ of nonlinearity $f(Hx)$ ranges from $7.5$ to $25.6$, which makes the Lipschitz-based design criterion \eqref{eq:ari_lip} infeasible given that $\ell^2$ would be very large. Therefore, in this case, the parameterization-free design criterion \eqref{eq:general_asymptotic} is preferable.

Choosing $\rho=1$, we use SeDuMi \cite{sturm1999} in MATLAB to solve the LMI \eqref{eq:LMI_general} and obtain the observer gain matrices $J,K,L$. Then, after using \eqref{eq:matrices_M,N} to obtain matrices $M$ and $N$, the observer \eqref{eq:obs} is initialized as $z(0)=-Ly(0)$. The measurement output $y(t)$ is assumed to be corrupted by a noise $v(t)\sim\mc{N}(0_{n_y},0.001 I_{n_y})$. The state estimation of each node is illustrated in Fig.~\ref{fig:estimation}, where it is important to point out that our observer is capable of handling the noise effectively in the estimation. The norm of the estimation error is plotted in Fig.~\ref{fig:error}.

\begin{figure}[t]
    \centering
    \includegraphics[width=0.35\textwidth, trim={0 0 0 0}, clip]{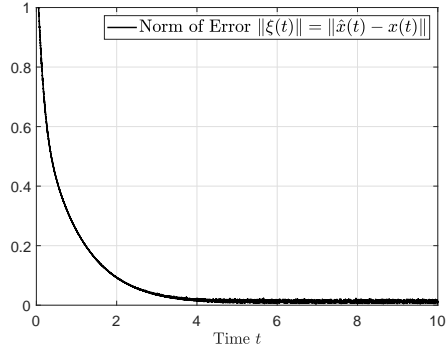}
    \caption{Norm of the estimation error.}
    \label{fig:error}
\end{figure}

\begin{figure}[t]
    \centering
    \includegraphics[width=\textwidth, trim={75 10 75 20}, clip]{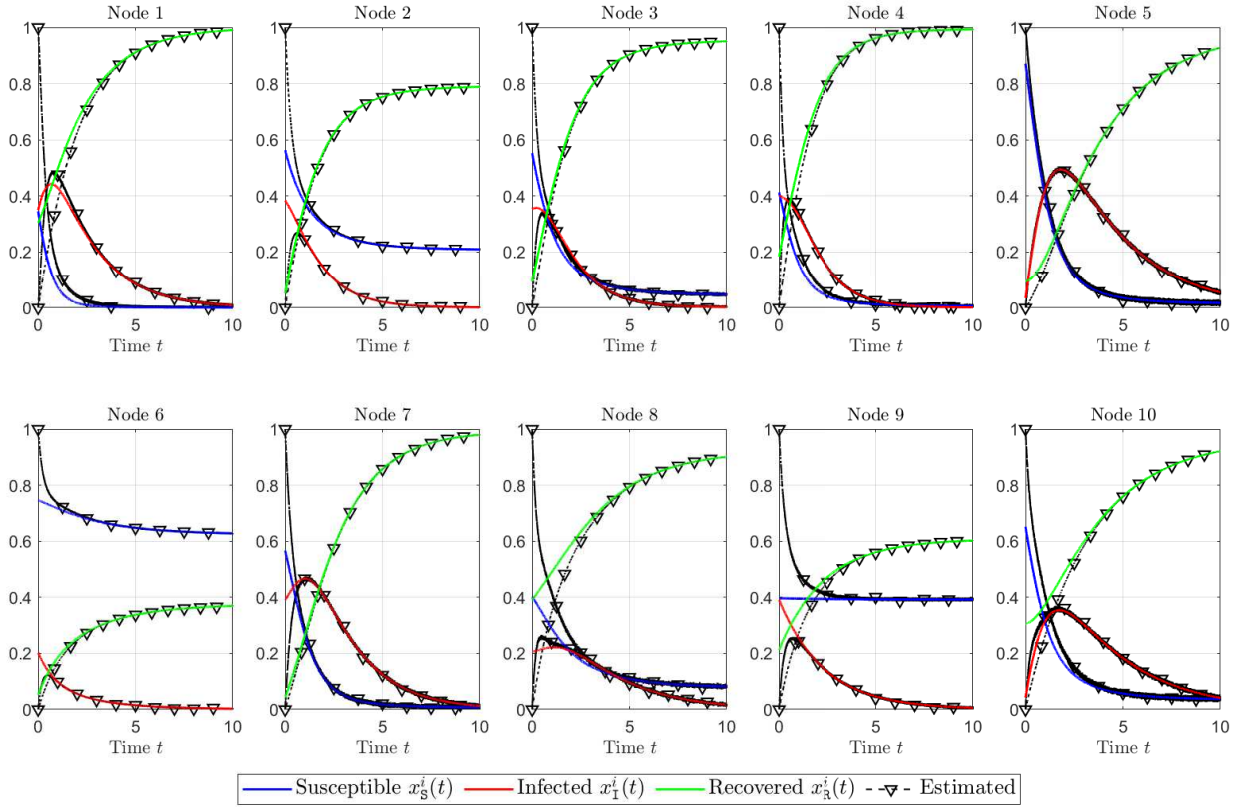}
    \caption{Epidemic state estimation in each node of the bidirected graph $\mc{G}$ depicted in Fig.~\ref{fig:graph}.}
    \label{fig:estimation}
\end{figure}

\section{Concluding Remarks}
\label{sec:conclusion}

We proposed a novel parameterization-free observer design approach for state estimation of nonlinear systems. We demonstrated that traditional methods relying on parameterized inequalities of the nonlinearity can lead to unnecessarily conservative observer design criteria when the parameterization is enforced globally. In contrast, our proposed approach makes no assumptions about the system's nonlinearity, rendering it parameterization-free.

We established observer design criteria that guarantee both asymptotic and exponential convergence of the estimation error to zero by treating the nonlinearity as an unknown disturbance in the estimation error dynamics, where the goal of the observer is to track not only the state but also the nonlinearity of the system. We also demonstrated the effectiveness of our approach by using it for the state estimation of a networked SIR epidemic model. This application is particularly important, as state estimation of epidemic models is crucial, and traditional parameterized methods become infeasible due to the quadratic nature of the nonlinearity.

Our proposed parameterization-free approach offers a promising alternative to traditional observer design methods and has the potential to be applied to a wide range of nonlinear systems. In conclusion, this work presents a significant contribution to the field of observer design for nonlinear systems and opens up new possibilities for state estimation in various applications. The future work includes a systematic method to handle process and measurement noises in the nonlinear system.

\bibliography{references}
\bibliographystyle{IEEEtran}

\end{document}